\newtheorem{theorem}{Theorem}
\newtheorem{remark}{Remark}
\newtheorem{lemma}{Lemma}
\newtheorem{proof}{Proof}
\begin{document}
\begin{frontmatter}

\title{AI-Augmented Density-Driven Optimal Control for Decentralized Environmental Mapping\thanksref{footnoteinfo}} 

\thanks[footnoteinfo]{This work was supported by NSF CAREER Grant CMMI-DCSD-2145810.}

\author[First]{Kooktae Lee} 
\author[First]{Julian Martinez}

\address[First]{Department of Mechanical Engineering, New Mexico Institute of Mining and Technology, Socorro, NM 87801, USA (email: kooktae.lee@nmt.edu / julian.martinez@student.nmt.edu).}

\begin{abstract}
This paper presents an AI-augmented decentralized framework for multi-agent environmental mapping under limited sensing and communication. 
While conventional coverage formulations achieve effective spatial allocation when an accurate reference map is available, their performance deteriorates under uncertain or biased priors. 
The proposed method introduces an adaptive and self-correcting mechanism that enables agents to iteratively refine local density estimates within an optimal transport–based framework, ensuring theoretical consistency and scalability.
A dual multilayer perceptron (MLP) module enhances adaptivity by inferring local mean–variance statistics and regulating virtual uncertainty for long-unvisited regions, mitigating stagnation around local minima. 
Theoretical analysis rigorously proves convergence under the Wasserstein metric, while simulation results demonstrate that the proposed AI-augmented Density-Driven Optimal Control consistently achieves robust and precise alignment with the ground-truth density, yielding substantially higher-fidelity reconstruction of complex multi-modal spatial distributions compared with conventional decentralized baselines.
\end{abstract}

\begin{keyword}
Multi-agent systems \sep
Decentralized control \sep
Non-uniform coverage \sep
Optimal transport \sep
Wasserstein metric \sep
Artificial intelligence \sep
Environmental mapping \sep
Autonomous systems
\end{keyword}

\end{frontmatter}

\section{Introduction}

Accurate spatial mapping is essential for autonomous multi-agent systems in applications such as environmental monitoring, pollution tracking, precision agriculture, and disaster response~\cite{hollinger2014sampling,binney2013optimizing}. 
In practice, agents must operate under limited sensing, partial communication, and dynamic constraints, while prior environmental knowledge is often sparse or outdated, making it challenging to design decentralized strategies that achieve both efficiency and adaptivity.

Conventional non-uniform coverage frameworks, such as Density-Driven Control (D$^2$C)~\cite{afrazi2025enhanced} and its optimal-control extension, Density-Driven Optimal Control (D$^2$OC)~\cite{seo2025smartfarm,seo2025density}, achieve effective spatial allocation when an accurate reference map is available (see the conceptual illustration in Fig.~\ref{fig:d2oc}). 
However, their performance deteriorates when the prior map is uncertain or biased. 
Other representative approaches, including ergodic exploration~\cite{mathew2011metrics,gkouletsos2021decentralized}, information-theoretic coverage~\cite{julian2012distributed}, and distributed optimal transport methods~\cite{krishnan2025distributed,bandyopadhyay2017probabilistic}, also pursue spatially balanced exploration through information- or transport-theoretic formulations. 
Nevertheless, these methods typically rely on accurate or well-established priors or globally consistent cost functions, limiting their adaptability under partial sensing and intermittent communication. 
In realistic environments, however, only coarse and dynamically evolving estimates of the underlying field are available, motivating the need for an adaptive and self-correcting coverage mechanism capable of refining its reference model through continual interaction with noisy observations.

\begin{figure}[b!]
    \centering
    \subfloat[]{\includegraphics[width=0.475\linewidth]{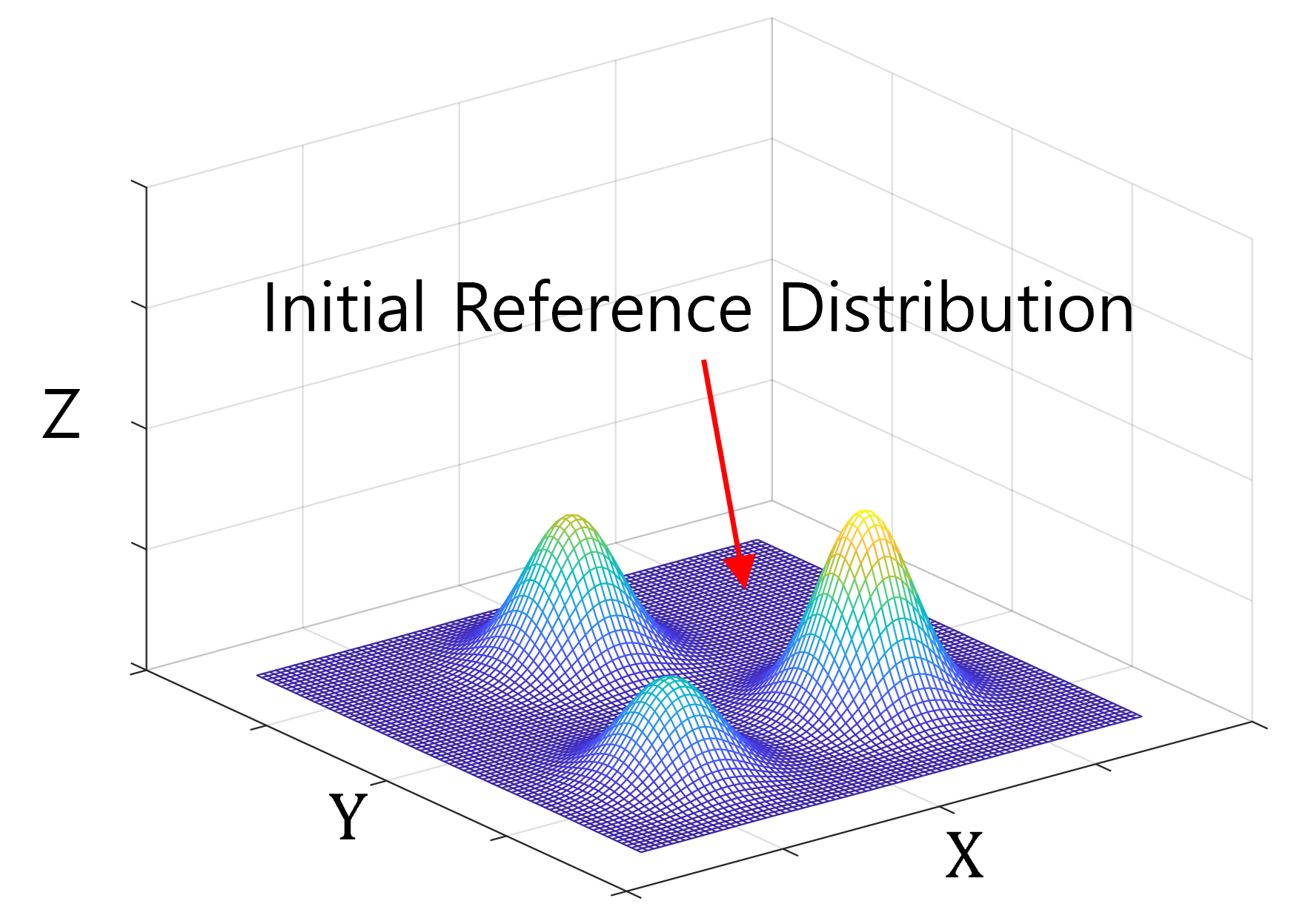}}
    \subfloat[]{\includegraphics[width=0.475\linewidth]{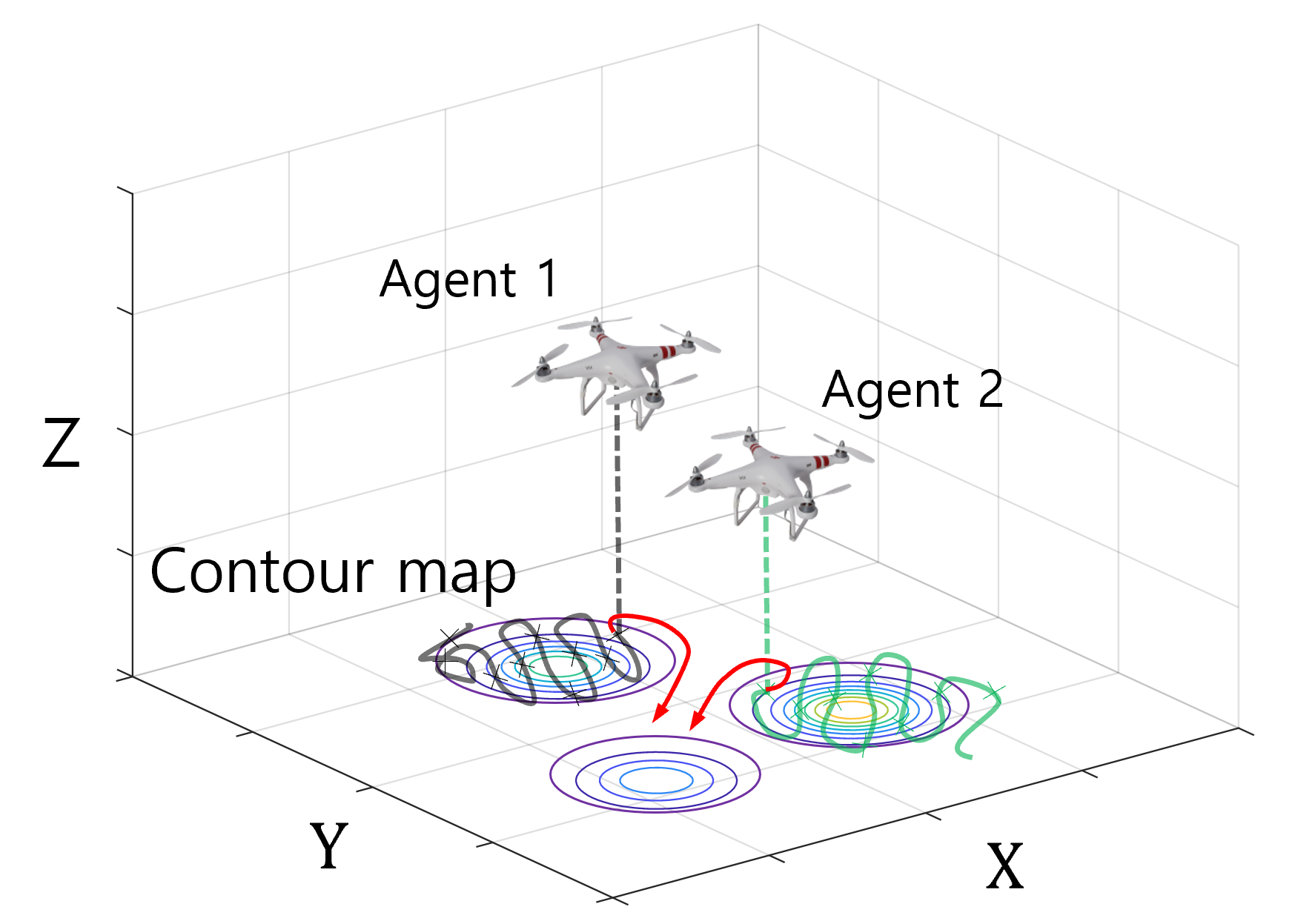}}
    \caption{Conceptual illustration of D$^2$C/D$^2$OC.}
    \label{fig:d2oc}
\end{figure}

To address these limitations, we propose an AI-augmented decentralized framework in which agents iteratively update local map estimates using online sensing and limited-range communication. 
This enables progressive reconstruction of the true spatial importance distribution while maintaining adaptive coverage. 
Unlike conventional informative path planning (IPP) schemes~\cite{binney2012branch,hollinger2014sampling}, which aim to maximize short-term information gain based on local uncertainty reduction, the proposed framework optimizes a distribution-level objective under the Wasserstein metric, ensuring globally consistent convergence toward the ground-truth density.

The proposed architecture builds upon the D$^2$OC formulation, which ensures mass-preserving updates under the Wasserstein metric for guaranteed convergence. 
It consists of three stages: 
(A) sample selection and optimal control, 
(B) coverage-tracking weight update, and 
(C) distributed consensus with range-limited communication. 
This structure supports fully decentralized implementation while maintaining theoretical consistency with optimal transport (OT) principles.

To enhance adaptivity, an AI module is incorporated into Stage~A. 
Each agent employs a dual multilayer perceptron (MLP) inference model: one network provides static mean–variance estimation, while the other dynamically adjusts virtual uncertainty for long-unvisited regions. 
These learned uncertainty weights modulate local sampling priorities, encouraging re-exploration of neglected areas without centralized coordination. 
The AI-augmented formulation preserves the analytical structure of D$^2$OC while improving robustness against sensing noise, sample renewal randomness, and local minima.

Unlike previous D$^2$C and D$^2$OC studies that assume a known and accurate reference map, this work addresses the fundamentally different problem of decentralized map reconstruction under uncertain or incomplete priors. 
The main contributions are summarized as follows:
1) A decentralized framework that reconstructs initially uncertain reference maps through local sensing and neighbor communication without central coordination;
2) Integration of an AI-based dual-MLP inference module into D$^2$OC to guide exploration based on predicted importance and uncertainty;
3) Theoretical proof of convergence under the Wasserstein metric, supported by simulations showing that the AI-augmented D$^2$OC avoids stagnation under biased priors and achieves a significantly lower steady-state Wasserstein distance than the non-AI baseline.


\section{Problem Description and Objectives}
\noindent
\textbf{Notations:}
Let $\mathbb{R}$ denote the set of real numbers, and $\mathbb{N}_0 := \{0,1,2,\ldots\}$ the set of nonnegative integers.
For any vector or matrix, $(\cdot)^\top$ denotes transpose.
The discrete-time index is indicated by the superscript $k \in \mathbb{N}_0$, e.g., $\mathbf{x}^k$ represents the state at time step $k$.
The Euclidean norm is denoted by $\|\cdot\|$, and the weighted norm is defined as $\|x\|_R^2 = x^\top R x$ for a positive-definite matrix $R \succ 0$.
The Dirac delta function is denoted by $\delta(\cdot)$.
For any set $\mathcal{S}$, $|\mathcal{S}|$ denotes its cardinality (i.e., the number of elements in $\mathcal{S}$).

We consider a team of $N_{\mathrm{a}}$ mobile agents operating in a bounded domain $\Omega \subset \mathbb{R}^d$.  
The true spatial importance of each point $x \in \Omega$ is represented by an unknown distribution $\rho_{\mathrm{GT}}$, indicating the priority of monitoring that region.  
In practice, agents initially possess only a coarse prior map $\hat{\rho}$ from historical or low-fidelity data, and must therefore \emph{explore the environment while refining their local estimates} to approximate $\rho_{\mathrm{GT}}$.

Each agent $i$ maintains a local \emph{sample-based reference map}:
\begin{equation}
\textstyle
\hat{\rho}_i^k(x) = \sum_{j=1}^{N_{\mathrm{s}}} w_{j}^k \, \delta(x - q_{j}^k),
\end{equation}
where $N_{\mathrm{s}}$ is the number of sample points, $q_{j}^k \in \Omega$ denotes the $j$th sample location, and $w_{j}^k$ its associated weight. 
Weights represent local importance inferred from sensing and are updated via communication with nearby agents.  
All computations are fully decentralized without any global fusion.

To ensure energy-aware exploration, each agent limits its operation to a finite set of discrete points~\(M\),
determined by its operation time \(T_{\mathrm{op}}\) and sampling interval \(\Delta t\),
i.e., \(M = T_{\mathrm{op}} / \Delta t\).
Each agent point is associated with a normalized mass \(1/M\),
and the sample weights satisfy \(\sum_j w_j^k = 1\),
defining the empirical distribution \(\hat{\rho}_i^k\) that evolves as exploration proceeds.

The objective is to minimize the discrepancy between each agent’s local map~$\hat{\rho}_i^k(x)$ and the ground-truth density~$\rho_{\mathrm{GT}}(x)$,
measured by the 2-Wasserstein distance:
\begin{equation}
\mathcal{W}_2^2\big(\hat{\rho}_i^k, \rho_{\mathrm{GT}}\big)
= \min_{\pi \in \Pi(\hat{\rho}_i^k,\rho_{\mathrm{GT}})}
\sum_{i,j} \pi_{ij}\, \| p_i - q_j \|^2, \label{eq:W_def}
\end{equation}
where $\Pi(\hat{\rho}_i^k,\rho_{\mathrm{GT}})$ denotes feasible transport plans
$\pi_{ij}\!\ge\!0$ satisfying
$\sum_j \pi_{ij} = 1/M$ and $\sum_i \pi_{ij} = w_j$.

We aim to design decentralized update rules and control inputs such that
\begin{equation}
\lim_{k \to \infty} \mathcal{W}_2\big(\hat{\rho}_i^k, \rho_{\text{GT}}\big) \le \epsilon, \quad \forall i,
\end{equation}
where $\epsilon$ is a small residual error due to finite sampling.

While classical OT seeks the optimal transport plan $\pi_{ij}$ minimizing \eqref{eq:W_def} for fixed point sets $\{p_i\}$ and $\{q_j\}$,
the proposed D$^2$OC framework instead determines the \emph{control-driven sequence of $\{p_i^k\}$} that gradually reduces this distance, which is an inherently much more challenging problem.

Conceptually, this involves two coupled objectives:
\begin{enumerate}[leftmargin=*, itemsep=1pt]
    \item \textbf{Adaptive coverage:} Agents move so that their collective trajectories are proportional to the estimated importance distribution, prioritizing high-value regions.  
    \item \textbf{Incremental map refinement:} As agents sense new data, they update the sample weights $w_{j}^k$, improving local map fidelity and guiding subsequent motions.
\end{enumerate}

The coupling can be formalized by the time-averaged empirical measure:
\begin{equation}
\textstyle\hat{\rho}_{i,\textrm{agent}}^K(x) = \frac{1}{K} \sum_{k=1}^{K} \delta(x - p_i^{k}),
\end{equation}
where $p_i^{k}$ is the agent’s position at step $k$.  
Accurate mapping is achieved when $\hat{\rho}_{i,\textrm{agent}}^K(x)$ matches $\hat{\rho}_i^k(x)$, which in turn converges toward $\rho_{\mathrm{GT}}$ via iterative weight and sample updates.  
Thus, motion and weight adaptation jointly minimize the Wasserstein distance between the empirical $\hat{\rho}$ and true distributions $\rho_{\mathrm{GT}}$.

In practice, the decentralized D$^2$OC framework~\cite{seo2025density} proceeds in three stages (A–C):
agents compute local centroids for motion, update sample weights, and exchange information with neighbors.
Unlike the original D$^2$OC formulation, which assumes that the reference distribution is perfectly known \emph{a priori} and does not require sensing-based updates,
the proposed AI-augmented framework newly incorporates an inference mechanism that estimates uncertainty in unobserved regions and prioritizes sampling.
This extension enables D$^2$OC to operate under realistic conditions where the true distribution is initially unknown and must be progressively inferred through exploration.

The overall goal is thus to design \emph{decentralized iterative rules for motion and weight updates} so that each local map $\hat{\rho}_i^k(x)$ converges to $\rho_{\mathrm{GT}}$ in Wasserstein distance, while collective trajectories realize non-uniform coverage aligned with the evolving map.
This formulation motivates the subsequent development of the control mechanism, AI inference, and convergence analysis.

\section{Decentralized Density-Driven Optimal Control}

This section formulates the proposed AI-augmented D$^2$OC framework for decentralized multi-agent coverage in environments with unknown or imperfectly estimated spatial importance distributions.
 
Before executing control updates, each agent renews its reference sample set through a \textit{Dynamic Sample Generation and Removal} process.  
The main procedure then consists of three stages: 
\textit{Stage~A: Sample Selection \& Optimal Control}, 
\textit{Stage~B: Weight Update}, and 
\textit{Stage~C: Communication \& Merging}.  
An AI-based module estimates sample uncertainty for computing priority scores, as detailed in Section~\ref{sec:AI}.

\subsection{Dynamic Sample Generation and Removal}

Before the control update in Stage~A, each agent renews its sample set 
$\mathcal{S}_{i}^k = \{ q_{j}^{k} \}$ based on recent sensory measurements.  
At each step, agent~$i$ explores its neighborhood within a sensing range~$r_\mathrm{s}$ 
and generates candidate samples according to the measured field intensity.
Each candidate location $x_j^k$ within the sensing range is evaluated through a noisy sensing model
\[
\xi_{j}^{k} = \rho_{\mathrm{GT}}(x_{j}^{k}) + \varepsilon_{j}^{k},
\qquad \varepsilon_{j}^{k} \sim \mathcal{N}(0,\sigma_\mathrm{sensor}^2),
\]
where $\xi_{j}^{k}$ denotes the instantaneous sensor reading. 
From accumulated measurements and neighborhood statistics, 
the local mean $\mu_j^{k}$ and variance $(\sigma_j^{k})^2$ are estimated, 
representing the agent’s belief about the field intensity at $x_j^k$.  
Candidates with higher measured values are retained and added as new samples $q_j^k$ to the reference set.  
Accepted points are initialized with nominal weights and variances, forming the newly born subset of $\mathcal{S}_i^k$.

Meanwhile, existing samples within the sensing region are examined for removal.  
If their field intensity falls below a distance-dependent threshold, they are discarded, representing the death process.  
This \textit{birth–death} mechanism continuously adapts the reference distribution~$\{q_j\}$ to new sensory information, maintaining a concise and up-to-date sample representation for subsequent control-driven redistribution in Stage~A.

\subsection{Stage A: Sample Selection and Optimal Control}
Let \(p_i^{k} \in \mathbb{R}^d\) denote the position of agent \(i\) at time \(k\).
Each sample \(q_{j}^k \in \mathcal{S}_i^k\) is assigned an \textit{importance score}
\begin{equation}
\phi_j^{k} = \mu_j^{k} + c_1 (\sigma_j^{k})^2 + c_2 (\sigma_{j,\mathrm{virtual}}^{k})^2,
\label{eq:phi}
\end{equation}
where $c_1, c_2$ are some positive coefficients, \(\mu_j^{k}\) and \((\sigma_j^{k})^2\) are the locally measured mean and variance 
obtained from sensor feedback at step \(k\),
and \((\sigma_{j,\mathrm{virtual}}^{k})^2\) is an auxiliary variance adaptively estimated 
by a separate AI module to encourage exploration of long-unvisited regions.
Two distinct neural modules are employed in practice:
one refines the measured statistics \((\mu_j^{k}, \sigma_j^{k})\) to suppress sensor noise,
while the other updates \(\sigma_{j,\mathrm{virtual}}^{k}\) online based on the agent’s visitation history.
The combined score balances reliable exploitation of informative regions 
(\(\mu_j^{k}, \sigma_j^{k}\)) with active exploration driven by the virtual uncertainty term.
This importance score is then used in the weighted centroid computation of Stage~A
to determine each agent’s next reference position.

The \textit{score} of each sample balances importance and proximity $s_j^k = \frac{\phi_j^k}{\|p_i^{k} - q_{j}^k\|}$.
Samples are sorted in descending order of \(s_j\),
and selected sequentially until the cumulative weight
reaches the agent’s total mass budget \(1/M\): $
\textstyle \sum_{j \in \mathcal{S}_{i,\mathrm{loc}}^{k}} w_j = \frac{1}{M}$.
The resulting subset \(\mathcal{S}_{i,\mathrm{loc}}^{k}\)
represents the locally influential samples that participate in the optimal control update for agent~\(i\).

\noindent\textbf{Optimal Control Formulation:}
Each agent follows discrete-time LTI dynamics:
\begin{equation}
\begin{aligned}
\mathbf{x}_i^{k+1} = A_i \mathbf{x}_i^{k} + B_i u_i^{k}, \quad
p_i^{k} &= C_i \mathbf{x}_i^{k},\label{eq:LTI}
\end{aligned}
\end{equation}
where $A_i \in \mathbb{R}^{n \times n}$, $B_i \in \mathbb{R}^{n \times m}$,
and $C_i \in \mathbb{R}^{d \times n}$ are the system matrices of agent~$i$,
and $u_i^{k} \in \mathbb{R}^m$ denotes the control input at time~$k$.

The finite-horizon optimal control problem minimizes the cumulative transport and control effort over a planning horizon~$H$:
\begin{equation}
\begin{aligned}
    J_i(u_i^{k:k+H-1})
    &= \sum_{l=k}^{k+H} \sum_{j \in \mathcal{S}_{i,\mathrm{loc}}^{l}}
       \pi_{lj}^*\, \| p_i^{l+1} - q_j^k \|^2
       + \sum_{l=k}^{k+H} \| u_i^{l} \|_R^2, 
       \label{eq:cost_func}
\end{aligned}
\end{equation}
subject to the discrete-time LTI dynamics in~\eqref{eq:LTI},  
where $H$ is the prediction horizon, $\pi_{lj}^*$ denotes the optimal transport plan with an analytic solution available in~\cite{kabir2021wildlife},  
$\mathcal{S}_{i,\mathrm{loc}}^{l}$ is the local sample set at step~$l$,  
and $R \succ 0$ is the control weighting matrix. Here, the first term denotes the local Wasserstein distance and the second term reflects the input penalty.

\noindent\textbf{Optimal Control via Weighted Centroid}:
For each agent \(i\), let \(\mathcal{S}_{i,\mathrm{loc}}^{l}\) denote the set of selected samples at discrete time \(l\) and \(\pi_{lj}^*\) be the transport weight associated with each sample \(q_{j}^k \in \mathcal{S}_{i,\mathrm{loc}}^{l}\). Define the weighted centroid of the selected samples for agent $i$ at time \(l\) as
\begin{equation}
\textstyle
q_{i,c}^{l} = 
\sum_{j \in \mathcal{S}_{i,\mathrm{loc}}^{l}} \pi_{lj}^* q_j^l \big/ \sum_{j \in \mathcal{S}_{i,\mathrm{loc}}^{l}} \pi_{lj}^*\label{eq:centroid}
\end{equation}

Then, the following result holds.

\begin{lemma}[Weighted centroid minimizes the local quadratic cost]
For agent $i$ at step $l$, consider 
\[
\textstyle
J_{i,\mathrm{dist}}^{l} = \sum_{j \in \mathcal{S}_{i,\mathrm{loc}}^{l}} \pi_{lj}^*\|p_i^{l+1} - q_{j}^k\|^2, 
\]
where $\pi_{lj}^* \ge 0$ denote the optimal transport coefficients. 
Let $\gamma_l := \sum_{j \in \mathcal{S}_{i,\mathrm{loc}}^{l}} \pi_{lj}^*$ and 
$q_{i,c}^{l} := \frac{1}{\gamma_l} \sum_{j \in \mathcal{S}_{i,\mathrm{loc}}^{l}} \pi_{lj}^* q_{j}^k$. 
Then $J_{i,\mathrm{dist}}^{l}$ is minimized by $p_i^{l+1} = q_{i,c}^{l}$.
\end{lemma}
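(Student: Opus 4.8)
The plan is to treat $J_{i,\mathrm{dist}}^{l}$ as a strictly convex quadratic function of the single free variable $p_i^{l+1}\in\mathbb{R}^d$ (the samples $q_j^k$ and the coefficients $\pi_{lj}^*$ being fixed data at step $l$) and to identify its unique minimizer by a completing-the-square argument. Throughout I would assume the nondegenerate case $\gamma_l:=\sum_{j\in\mathcal{S}_{i,\mathrm{loc}}^{l}}\pi_{lj}^*>0$, which holds whenever the selected local set is nonempty with a positive mass budget $1/M$; if $\gamma_l=0$ the cost is identically zero and the statement is vacuous, so this degenerate subcase can be dispatched in one line.

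The key algebraic step is the bias–variance-type decomposition
\[
\|p_i^{l+1}-q_j^k\|^2
=\|p_i^{l+1}-q_{i,c}^{l}\|^2
+2\,(p_i^{l+1}-q_{i,c}^{l})^\top(q_{i,c}^{l}-q_j^k)
+\|q_{i,c}^{l}-q_j^k\|^2 .
\]
Multiplying by $\pi_{lj}^*\ge 0$ and summing over $j\in\mathcal{S}_{i,\mathrm{loc}}^{l}$, the cross term carries the factor
$\sum_j \pi_{lj}^*(q_{i,c}^{l}-q_j^k)=\gamma_l\,q_{i,c}^{l}-\sum_j\pi_{lj}^* q_j^k=0$
by the very definition $q_{i,c}^{l}=\frac{1}{\gamma_l}\sum_j\pi_{lj}^*q_j^k$. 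Hence
\[
J_{i,\mathrm{dist}}^{l}
=\gamma_l\,\|p_i^{l+1}-q_{i,c}^{l}\|^2
+\sum_{j\in\mathcal{S}_{i,\mathrm{loc}}^{l}}\pi_{lj}^*\,\|q_{i,c}^{l}-q_j^k\|^2 ,
\]
where the second term is independent of $p_i^{l+1}$. Since $\gamma_l>0$ and $\|p_i^{l+1}-q_{i,c}^{l}\|^2\ge 0$ with equality iff $p_i^{l+1}=q_{i,c}^{l}$, the minimizer is unique and equals $q_{i,c}^{l}$, which is the claim. As an alternative/confirming route I would note that $\nabla_{p}J_{i,\mathrm{dist}}^{l}=2\sum_j\pi_{lj}^*(p-q_j^k)=2(\gamma_l p-\gamma_l q_{i,c}^{l})$ vanishes exactly at $p=q_{i,c}^{l}$, and the Hessian $2\gamma_l I\succ 0$ certifies this critical point as the global minimum.

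There is essentially no hard obstacle here; the only points requiring care are (i) stating the $\gamma_l>0$ assumption explicitly so the centroid is well defined, and (ii) making sure the cancellation of the cross term is written cleanly, since it is the single place where the specific form of $q_{i,c}^{l}$ enters. Everything else is routine convex-quadratic bookkeeping, and the decomposition display above also exhibits the residual cost $\sum_j\pi_{lj}^*\|q_{i,c}^{l}-q_j^k\|^2$, which is convenient to have on record for the later Wasserstein convergence analysis.
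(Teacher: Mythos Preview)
Your proposal is correct and follows essentially the same route as the paper: both rely on the weighted-squared-distance identity
$\sum_j \pi_{lj}^*\|p_i^{l+1}-q_j^k\|^2=\gamma_l\|p_i^{l+1}-q_{i,c}^{l}\|^2+\sum_j\pi_{lj}^*\|q_j^k-q_{i,c}^{l}\|^2$
and then observe the second term is independent of $p_i^{l+1}$. Your version is simply more explicit (you derive the identity rather than citing it, handle the $\gamma_l>0$ caveat, and add a confirming gradient/Hessian check), but the argument is the same.
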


\begin{proof}
Using the standard identity for weighted squared distances,
$\sum_{j} \pi_{lj}^* \|p_i^{l+1} - q_{j}^k\|^2
= \gamma_l \|p_i^{l+1} - q_{i,c}^{l}\|^2
+ \sum_{j} \pi_{lj}^* \|q_{j}^k - q_{i,c}^{l}\|^2$,
where $\gamma_l = \sum_j \pi_{lj}^*$.
The second term is independent of $p_i^{l+1}$,
so the minimum is attained at $p_i^{l+1} = q_{i,c}^{l}$.
Since the centroid is recomputed at every control step,
this property holds sequentially across the $H$-step horizon,
ensuring local optimality at each stage.
\end{proof}

\noindent\textbf{Optimal Control under Discrete-Time LTI Dynamics}:
The previous lemma characterizes the instantaneous optimal position update that minimizes the local quadratic cost.  
In a more realistic setting, however, agent motion is constrained by discrete-time linear dynamics.  
We therefore extend this result to derive a closed-form control law that optimally drives each agent toward its weighted centroid over a finite prediction horizon.

\medskip
\begin{theorem}[Analytic Receding-Horizon Control Law in D$^2$OC]\label{thm:opt_con}
Consider agent $i$ with discrete-time LTI dynamics \eqref{eq:LTI}
and the stage cost induced by \eqref{eq:cost_func} at step $l$.
Let the weighted centroid of $\mathcal{S}_{i,\mathrm{loc}}^{l}$ be defined in \eqref{eq:centroid} with
$\gamma_l := \sum_{j \in \mathcal{S}_{i,\mathrm{loc}}^{l}} \pi_{lj}^*$.
Then the control input that minimizes the stage cost at step $l$ is
\begin{equation}
    \begin{aligned}
        u_i^{\,l}
        &= (R + \gamma_l B_i^\top B_i)^{-1}
          \gamma_l B_i^\top (q_{i,c}^{\,l} - A_i \mathbf{x}_i^{\,l}),\\
        &l = k,\dots,k{+}H{-}1.\label{eq:opt_con}
    \end{aligned}
\end{equation}
Applying this stepwise for $l = k,\dots,k{+}H{-}1$ yields a receding-horizon control law consistent with the $H$-step cost.
\end{theorem}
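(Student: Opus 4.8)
The plan is to reduce the per-step stage cost to a strictly convex quadratic in the single decision variable $u_i^l$ and solve the resulting normal equation in closed form. First I would isolate from \eqref{eq:cost_func} the contribution associated with step $l$, namely the local transport term $J_{i,\mathrm{dist}}^l = \sum_{j \in \mathcal{S}_{i,\mathrm{loc}}^l} \pi_{lj}^*\,\|p_i^{l+1} - q_j^k\|^2$ together with the input penalty $\|u_i^l\|_R^2$. Invoking the weighted-squared-distance identity already used to prove Lemma~1, I would rewrite $J_{i,\mathrm{dist}}^l = \gamma_l\,\|p_i^{l+1} - q_{i,c}^l\|^2 + \sum_j \pi_{lj}^*\,\|q_j^k - q_{i,c}^l\|^2$, where the second sum is independent of $p_i^{l+1}$ and hence irrelevant to the minimization over $u_i^l$.

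Next I would substitute the dynamics. Writing $p_i^{l+1} = A_i \mathbf{x}_i^l + B_i u_i^l$ (the general output case replaces $A_i,B_i$ by $C_iA_i,C_iB_i$ and specializes to the stated form when the position coincides with the state; I would record this normalization explicitly), the step-$l$ objective becomes $g(u_i^l) = \gamma_l\,\|A_i \mathbf{x}_i^l + B_i u_i^l - q_{i,c}^l\|^2 + (u_i^l)^\top R u_i^l$. Its Hessian $2(R + \gamma_l B_i^\top B_i)$ is positive definite because $R \succ 0$ and $\gamma_l B_i^\top B_i \succeq 0$ (note $\gamma_l \ge 0$ as a sum of nonnegative transport coefficients), so $g$ is strictly convex with a unique global minimizer. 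Setting $\nabla g = 2\gamma_l B_i^\top (A_i \mathbf{x}_i^l + B_i u_i^l - q_{i,c}^l) + 2R u_i^l = 0$ and rearranging gives the normal equation $(R + \gamma_l B_i^\top B_i)\,u_i^l = \gamma_l B_i^\top (q_{i,c}^l - A_i \mathbf{x}_i^l)$; since the coefficient matrix is invertible, left-multiplying by its inverse yields \eqref{eq:opt_con}.

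For the receding-horizon claim I would note that the argument treats $\mathbf{x}_i^l$ as the current (measured or propagated) state and $q_{i,c}^l,\mathcal{S}_{i,\mathrm{loc}}^l,\gamma_l$ as quantities recomputed at step $l$, so applying \eqref{eq:opt_con} successively for $l=k,\dots,k+H-1$ realizes the stated controller, with each stage cost in \eqref{eq:cost_func} minimized independently. The main obstacle is not the algebra but the modeling bookkeeping: since $u_i^l$ influences every later state through the dynamics, the stage costs do not decouple across $l$ in the open-loop sense, so \eqref{eq:opt_con} is the per-stage (greedy) minimizer that the receding-horizon scheme applies at each step rather than the exact $H$-step open-loop optimum. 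I would make that distinction explicit — together with the $C_i=I$ normalization and the nonnegativity/invertibility facts above — so that the statement is precise.
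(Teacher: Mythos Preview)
Your proposal is correct and follows essentially the same route as the paper: isolate the step-$l$ stage cost, apply the weighted-squared-distance decomposition to replace the sum over samples by $\gamma_l\|p_i^{l+1}-q_{i,c}^l\|^2$ plus a $u$-independent remainder, substitute the dynamics, and solve the resulting strictly convex quadratic via its first-order condition. Your added remarks on the $C_i=I$ normalization, the positive-definiteness of $R+\gamma_l B_i^\top B_i$, and the greedy/open-loop distinction are more explicit than the paper's own proof but are consistent with it and sharpen the exposition.
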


\vspace{-1em}
\begin{proof}
At each step $l$, the instantaneous stage cost can be written as
$\tilde{J}_i^{l}
= \sum_{j \in \mathcal{S}_{i,\mathrm{loc}}^{l}}
    \pi_{lj}^* \|A_i \mathbf{x}_i^{\,l} + B_i u_i^{\,l} - q_{j}^{\,l}\|^2
  + (u_i^{\,l})^\top R u_i^{\,l}$.
The first term quantifies the transport discrepancy between the propagated agent position 
and the nearby samples selected at step $l$, while the second term penalizes control energy to prevent excessive maneuvers.  
Here, $\pi_{lj}^*$ represents the portion of mass transported from the agent’s next predicted position 
to each sample $q_j^{\,l}$, establishing a direct coupling between optimal transport and control.

Applying the weighted-distance decomposition yields
$
\sum_{j}$ $\pi_{lj}^* \|A_i \mathbf{x}_i^{\,l} + B_i u_i^{\,l} - q_{j}^{\,l}\|^2
= \gamma_l \|A_i \mathbf{x}_i^{\,l} + B_i u_i^{\,l} - q_{i,c}^{\,l}\|^2
  + \sum_{j} \pi_{lj}^* \|q_{j}^{\,l} - q_{i,c}^{\,l}\|^2$,
where $\gamma_l = \sum_{j \in \mathcal{S}_{i,\mathrm{loc}}^{l}} \pi_{lj}^*$ is the total transported mass.  
The second term is independent of $u_i^{\,l}$ and thus does not affect the minimizer.  
Minimizing $\tilde{J}_i^{l}$ is therefore equivalent to minimizing
\[
\hat{J}_i^{l}
= \gamma_l \|A_i \mathbf{x}_i^{\,l} + B_i u_i^{\,l} - q_{i,c}^{\,l}\|^2
  + (u_i^{\,l})^\top R u_i^{\,l}.
\]
This formulation reveals a quadratic trade-off between tracking accuracy (through centroid alignment) 
and actuation cost (through the control penalty).

Taking the gradient with respect to $u_i^{\,l}$ and setting it to zero gives
\[
\nabla_{u_i^{\,l}}\hat{J}_i^{\,l}
= 2 \gamma_l B_i^\top (A_i \mathbf{x}_i^{\,l} + B_i u_i^{\,l} - q_{i,c}^{\,l})
  + 2 R u_i^{\,l} = 0,
\]
which leads to
$
(R + \gamma_l B_i^\top B_i) u_i^{\,l}
= \gamma_l B_i^\top (q_{i,c}^{\,l} - A_i \mathbf{x}_i^{\,l})$.
Solving for $u_i^{\,l}$ yields the analytic form in \eqref{eq:opt_con}.  
Since the horizon cost is additive and the centroid is updated at each step, 
applying the one-step optimizer sequentially provides a stagewise implementation 
of the finite-horizon control law.
\end{proof}


\subsection{Stage B: Coverage-Tracking Weight Update}

After each motion step, each agent updates its local sample weights
to reflect the accumulated coverage of the surrounding region.  
This operation does not physically transport mass but serves as a 
\textit{coverage-tracking step} that records local exploration.
Consequently, samples in frequently visited areas lose influence,
while those in rarely visited regions retain higher importance.
The total empirical mass is strictly preserved,
ensuring consistency with the D$^2$OC conservation property.

Let \(d_{ij}^{k} = \|p_i^{k} - q_{j}^k\|\) denote the distance between agent~\(i\)
and sample~\(q_{j}^k\).
Samples closer to the agent are assigned higher removal priority as
\begin{equation}
r_j = \frac{d_{ij}^{k}}{1 + \beta \big((\sigma_j^{k})^2 + (\sigma_{j,\mathrm{virtual}}^{k})^2\big)},
\label{eq:priority r_j}
\end{equation}
where $\beta>0$ balances spatial proximity and uncertainty.  
Here, $\sigma_j^{k}$ is the intrinsic sensing variance,
and $\sigma_{j,\mathrm{virtual}}^{k}$ is the AI-adapted exploration variance.

Let $\mathcal{Q}_i^k \subseteq \mathcal{S}_i^k$ denote the subset of samples
selected for weight reduction in ascending order of~\eqref{eq:priority r_j},
such that their cumulative decrease satisfies
\begin{equation}
\textstyle
\sum_{j\in\mathcal{Q}_i^k} (w_j^{k} - w_j^{k+1}) = \frac{1}{M}.
\label{eq:weight_update}
\end{equation}
This update quantifies cumulative coverage progress 
while preserving total mass and preventing numerical drift.


\subsection{Stage C: Communication and Merging}

At each iteration, agents exchange their locally maintained sample sets with nearby peers within the communication range $r_c$.  
The neighbor set of agent~$i$ at step~$k$ is defined as
\vspace{-.9em}
\begin{equation}
\mathcal{N}_i^{k} = \{ j \mid \| p_i^{k} - p_j^{k} \| \le r_c \},
\end{equation}
and information exchange occurs only among such neighboring pairs.

Through local communication, each agent gathers the sample sets of its neighbors 
$\{\mathcal{S}_j^k \mid j \in \mathcal{N}_i^k\}$ 
and merges them with its own set $\mathcal{S}_i^k$.
Spatially redundant samples are unified based on positional proximity, and the merged set is downsampled to maintain a fixed size $N$ using a farthest-point selection strategy \cite{li2022adjustable}.
This guarantees that each agent preserves a compact yet diverse representation of the collective sampling history without unbounded growth in memory.

After merging, the sample weights are projected back to the feasible set of nonnegative measures to preserve mass consistency:
\vspace{-1em}
\begin{equation}
\textstyle
w_{j}^{k+1} \ge 0, \quad 
\sum_{j} w_{j}^{k+1} = \sum_{j} w_{j}^{k}.
\end{equation}
This procedure guarantees that each agent maintains a bounded yet informative local representation of the environment.

\medskip
\begin{remark}[Role of Stage C in Decentralized Consensus]
Stage C functions as the decentralized consensus layer of the framework.  
Through limited-range exchanges, agents reconcile their local sample representations 
to maintain globally consistent mapping and coverage information.  
This enables coherent control updates across the network even under partial communication and asynchronous operation.
\end{remark}

\section{Dual-MLP Inference for Mean–Variance and Adaptive Uncertainty Estimation}\label{sec:AI}

To guide decentralized multi-agent exploration, two neural networks (MLPs) estimate statistical properties of candidate samples. 
The first refines the locally measured mean and variance of the environmental field, while the second adaptively updates a \textit{virtual} variance reflecting how long a region has been unvisited. 
Together, these modules provide complementary exploitation and exploration cues without altering the core D$^2$OC control framework.

\subsection{Dual-MLP Architecture}

The framework employs two independent MLPs.

\noindent \textbf{MLP-MeanVar:} predicts local mean $\mu_j^{k}$ and intrinsic variance $(\sigma_j^{k})^2$ for samples $q_{j}^k \in \mathcal{S}_i^k$.
Inputs include the sample coordinates $q_{j}^k$, agent state $p_i^{k}$ (and optionally velocity), historical observations, and neighborhood statistics.
The network consists of two hidden layers with 64 neurons each and employs Rectified Linear Unit activations,
$\text{ReLU}(x)=\max(0,x)$.
It outputs a two-dimensional vector $y_j^{k} = [\mu_j^{k}, \log (\sigma_j^{k})^2]$,
where representing the variance in logarithmic form ensures numerical stability and positivity.
The predicted statistics are then substituted into the importance score \eqref{eq:phi}, which is used in Stage~A for centroid computation.
This network operates in inference mode only, with no online training.

\noindent \textbf{MLP-AdaptiveStd:} updates the virtual variance $\sigma_{j,\mathrm{virtual}}^{k}$ through online learning to promote exploration.
Its input concatenates the sample feature $z_j^{k}$ and the current $\sigma_{j,\mathrm{virtual}}^{k}$, and the output is the increment $\Delta\sigma_{j,\mathrm{virtual}}^{k}$.
It shares the same layer structure as MLP-MeanVar but has a single output neuron.
This network is trained online prior to Stage~B, so that long-unvisited samples receive larger virtual variance and thus reduced removal priority in \eqref{eq:priority r_j}.

\subsection{Integration with D$^2$OC}

\noindent \textbf{Sample Scoring and Weight Redistribution:}
Both Stage~A and Stage~B use the same importance score $\phi_j^{k}$.
In Stage~A, it determines the weighted centroid guiding agent motion toward informative or uncertain regions.
In Stage~B, the same score appears in the removal priority in \eqref{eq:priority r_j} so that regions with high total uncertainty retain mass longer, promoting balanced coverage.
This unified formulation maintains consistent exploration–exploitation trade-offs while preserving local mass conservation.

\subsection{Online Learning Procedure for Dual MLPs}

Both MLPs operate over the entire sample set $\mathcal{S}_i^k$ maintained by each agent.
The first network (MLP-MeanVar) serves as a fixed regressor providing $(\mu_j^k, (\sigma_j^k)^2)$ for each sample, which can be pre-trained or replaced by user-defined parameters depending on the experimental setup.
The second network (MLP-AdaptiveStd) is updated online before Stage~B using visitation history and recent measurements to estimate a virtual uncertainty term for long-unvisited or newly generated samples.
This additional term is combined with the baseline variance in both Stage~A’s centroid computation and Stage~B’s redistribution priority.

\noindent\textbf{Forward Computation:}
For the $L$-layer MLP-MeanVar, the forward pass is defined as
\begin{align*}
h^{(0)} &= x_j, \quad y_j = W^{(L)} h^{(L-1)} + b^{(L)} =
\begin{bmatrix}
\mu_j^{k},\, \log (\sigma_j^{k})^2
\end{bmatrix}^{\!\top},\\
h^{(l)} &= \psi\!\big(W^{(l)} h^{(l-1)} + b^{(l)}\big), \quad l = 1,\dots,L{-}1,
\end{align*}
where $x_j$ includes the spatial coordinates $q_j^k$, agent state, and historical sensor measurements.
The activation $\psi(\cdot)$ is ReLU, $\psi(x)=\max(0,x)$.
The logarithmic form ensures numerical stability and positivity of the predicted variance.

Similarly, for MLP-AdaptiveStd,
\begin{align*}
z^{(0)}_j &= [q_{j}^k;\, \sigma_{j,\mathrm{virtual}}^{k}], \quad 
\Delta \sigma_j = V^{(L)} z^{(L-1)} + c^{(L)},\\
z^{(l)}_j &= \psi\!\big(V^{(l)} z^{(l-1)} + c^{(l)}\big), \quad l = 1,\dots,L{-}1,
\end{align*}
where the input concatenates the sample feature with its current virtual variance.
The output $\Delta\sigma_j$ is applied to update the virtual variance as
\[
\sigma_{j,\mathrm{virtual}}^{k+1}
= \sigma_{j,\mathrm{virtual}}^{k} + \Delta\sigma_j.
\]

\noindent\textbf{Loss Function:}
Since MLP-MeanVar operates only in inference mode, no loss is computed for it.
MLP-AdaptiveStd, however, minimizes a simple regularization loss:
\begin{equation*}
\textstyle
\mathcal{L}_\text{AdaptiveStd}
= \frac{1}{|\mathcal{S}_i^k|}
\sum_{q_{j}^k \in \mathcal{S}_i^k}
\big(\Delta \sigma_{j,\mathrm{virtual}} - \Delta \sigma_{\mathrm{target}}\big)^2.
\end{equation*}
This loss enforces stability by minimizing unnecessary changes in the virtual standard deviation.
The target value $\Delta \sigma_{\mathrm{target}} = 0$ is chosen as the nominal per-step behavior, i.e., if a sample is being sensed regularly, its virtual uncertainty should not keep growing.
In other words, no further inflation of $\sigma_{j,\mathrm{virtual}}$ is desired in the well-observed (steady-state) case. Only long-unvisited samples, detected through the online update rule, produce nonzero increments, whereas frequently visited ones remain nearly unchanged.

\noindent\textbf{Parameter Update:}
After computing $\mathcal{L}_\text{AdaptiveStd}$ at step $k$,
the network parameters are updated using a gradient-descent rule:
\[
\theta_\text{AdaptiveStd}^{k+1}
= \theta_\text{AdaptiveStd}^{k}
- \eta\,\nabla_{\theta_\text{AdaptiveStd}}
\mathcal{L}_\text{AdaptiveStd}^{k},
\]
where $\eta>0$ denotes the learning rate.
The updated parameters are then used in the subsequent control step \(k{+}1\) to predict
\[
\Delta \sigma_{j,\mathrm{virtual}}^{k+1}
= f_\text{AdaptiveStd}(q_{j}^k, p_i^{k+1}; \theta_\text{AdaptiveStd}^{k+1}).
\]

\noindent\textbf{Algorithmic Summary:}
At each control step \(k\), the following procedure is executed:
\begin{enumerate}
    \item Measure local field values and compute empirical $\hat{\mu}_j$, $\hat{\sigma}_j^2$.
    \item Predict $(\mu_j^k, (\sigma_j^k)^2)$ via the fixed MLP-MeanVar.
    \item Compute $\Delta\sigma_{j,\mathrm{virtual}}$ via MLP-AdaptiveStd and update 
    $\sigma_{j,\mathrm{virtual}}^{k+1} = \sigma_{j,\mathrm{virtual}}^{k} + \Delta\sigma_{j,\mathrm{virtual}}$.
    \item Compute the loss $\mathcal{L}_\text{AdaptiveStd}$ and update parameters using backpropagation.
    \item Proceed to Stages~A and~B using updated virtual variances and predicted means/variances.
\end{enumerate}

\medskip
\begin{remark}[Dual-Network Roles and Advantages]
The dual-network design separates static inference from adaptive learning.
MLP-MeanVar provides stable baseline statistics for environmental reconstruction,
while MLP-Adaptive Std continuously refines exploration bias through online updates.
This separation maintains a clear distinction between intrinsic field uncertainty and exploration-induced virtual uncertainty, achieving a stable exploration–exploitation balance within the decentralized D$^2$OC framework.
\end{remark}

\section{Convergence Analysis}

This section establishes that the iterative execution of the sensor-driven renewal 
and Stages~A--C in the proposed D$^2$OC framework drives the empirical distribution 
$\hat{\rho}$ toward the ground-truth density $\rho_{\mathrm{GT}}$ in the Wasserstein sense. 
At each control step, agents locally minimize the quadratic cost~\eqref{eq:cost_func} 
by moving toward their weighted centroids~\eqref{eq:centroid}, 
while the renewal process dynamically adds or removes samples based on the sensed field intensity. 
This continual update of sample weights refines $\hat{\rho}$ around $\rho_{\mathrm{GT}}$ 
as the environment is progressively explored. 
The overall evolution of $\hat{\rho}$ thus combines 
optimal local motion, bounded auxiliary adjustment, 
and intermittent consensus among neighboring agents. 
Under these coupled dynamics, the following convergence property holds.

\medskip
\begin{theorem}[Band-Limited Convergence of AI-Augmented D$^2$OC]
\label{thm:convergence}
Consider $N_{\mathrm{a}}$ agents evolving under the discrete-time LTI dynamics~\eqref{eq:LTI}.
Agent $i$ at step $k$ maintains an empirical, nonstationary measure
\[
\textstyle
\hat{\rho}_i^{k} = \sum_j w_{j}^{k}\,\delta(x - q_{j}^{k}), 
\qquad
\hat{\rho}^{k} = \frac{1}{N_{\mathrm{a}}}\sum_{i=1}^{N_{\mathrm{a}}}\hat{\rho}_i^{k}.
\]
The measure $\hat{\rho}^{k}$ is updated at each step by the D$^2$OC control loop, consisting of a sensor-driven renewal (pre-update) and the control–communication stages (A–C).

Assume the following conditions hold:
\begin{enumerate}
    \item \textbf{Pre-update (Dynamic sample generation and removal).}
    Before each control update, every agent renews its local sample set using new sensor feedback.
    High-value regions generate new samples, while those whose field intensity falls below a distance-dependent threshold are removed.
    This birth–death renewal perturbs the empirical distribution by a uniformly bounded amount,
    referred to as the \emph{renewal disturbance}, satisfying $\varepsilon_{\mathrm{renew}}>0$.

    \item \textbf{Stage~A (Local optimal control).}
    Each agent applies the analytic optimal control~\eqref{eq:opt_con} derived in Theorem~\ref{thm:opt_con},
    producing a contractive update in the Wasserstein metric.

    \item \textbf{Stage~B (Coverage-Tracking redistribution).}
    After the motion update, each agent redistributes its sample weights to
    represent how much of the surrounding region has been explored.
    The redistribution priority follows \eqref{eq:priority r_j},
    where $\beta>0$ balances spatial proximity and combined uncertainty.  
    The total empirical mass of each agent remains exactly conserved,
    i.e., $\sum_{j\in\mathcal{Q}_i^k} w_{j}^{k+1} = \sum_{j\in\mathcal{Q}_i^k} w_{j}^{k} = 1/M$.
    This step reassigns existing weights without creating new mass and therefore introduces no additional disturbance to convergence.

    \item \textbf{Stage~C (Intermittent consensus).}
    Communication occurs over a jointly connected graph,
    guaranteeing that local empirical measures asymptotically agree in expectation.
\end{enumerate}

Then, there exist constants $0<\underline{\mathcal{W}}\le\overline{\mathcal{W}}$ such that the nonstationary empirical distribution satisfies
\[
\underline{\mathcal{W}}
\le
\liminf_{k\to\infty} \mathcal{W}_2(\hat{\rho}^k, \rho_{\mathrm{GT}})
\le
\limsup_{k\to\infty} \mathcal{W}_2(\hat{\rho}^k, \rho_{\mathrm{GT}})
\le
\overline{\mathcal{W}},
\]
where $\overline{\mathcal{W}} = \tfrac{\varepsilon_{\mathrm{renew}}}{1 - \gamma_A}$ and $\underline{\mathcal{W}} > 0$,  
with $\varepsilon_{\mathrm{renew}} > 0$ denoting the uniformly bounded disturbance introduced by the sensor-driven renewal step and $0 < \gamma_A < 1$ representing the contraction rate of the local control mapping.

\end{theorem}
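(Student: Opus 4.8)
The plan is to treat one full D$^2$OC iteration as a \emph{perturbed contraction} and run an input-to-state-stability-type argument on the scalar sequence $W^k := \mathcal{W}_2(\hat{\rho}^k,\rho_{\mathrm{GT}})$. First I would decompose a complete step $\hat{\rho}^k \mapsto \hat{\rho}^{k+1}$ into the composition of four sub-maps — the sensor-driven renewal $T_{\mathrm{renew}}$, the Stage~A control update $T_A$, the Stage~B weight redistribution $T_B$, and the Stage~C consensus/merge $T_C$ — and bound the effect of each on the distance to $\rho_{\mathrm{GT}}$ using the triangle inequality for $\mathcal{W}_2$ together with its joint convexity in both arguments, so that forming the average $\hat{\rho}^k = \tfrac{1}{N_{\mathrm{a}}}\sum_i \hat{\rho}_i^k$ out of the per-agent measures is itself non-expansive.

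Next I would assemble the one-step recursion. By Assumption~1 the renewal perturbs the measure by at most $\varepsilon_{\mathrm{renew}}$ in $\mathcal{W}_2$, hence $\mathcal{W}_2(T_{\mathrm{renew}}\hat{\rho},\rho_{\mathrm{GT}}) \le \mathcal{W}_2(\hat{\rho},\rho_{\mathrm{GT}}) + \varepsilon_{\mathrm{renew}}$. By Assumption~2 — which is precisely the analytic centroid-tracking law of Theorem~\ref{thm:opt_con}, driving each agent to the weighted centroid of its locally selected samples — Stage~A is a contraction with rate $\gamma_A \in (0,1)$, i.e.\ $\mathcal{W}_2(T_A\hat{\rho},\rho_{\mathrm{GT}}) \le \gamma_A\,\mathcal{W}_2(\hat{\rho},\rho_{\mathrm{GT}})$. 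By Assumption~3, Stage~B only reassigns existing mass among already-present atoms with $\sum_j w_j$ conserved, so it is non-expansive (any bounded residual it introduces can be folded into $\varepsilon_{\mathrm{renew}}$); and by Assumption~4, Stage~C is the average/merge operation, non-expansive by convexity, with the jointly-connected-graph hypothesis forcing the inter-agent disagreement to vanish so that $\hat{\rho}^k$ inherits the contraction. Chaining these yields
\begin{equation*}
W^{k+1} \;\le\; \gamma_A\,W^{k} + \varepsilon_{\mathrm{renew}},
\end{equation*}
where the exact placement of the factor $\gamma_A$ relative to $\varepsilon_{\mathrm{renew}}$ depends only on the order in which the sub-maps are applied and changes the final constant by at most a factor $\gamma_A$.

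For the upper bound I would unroll this affine recursion: $W^{k} \le \gamma_A^{k} W^{0} + \varepsilon_{\mathrm{renew}}\sum_{l=0}^{k-1}\gamma_A^{l} \le \gamma_A^{k} W^{0} + \varepsilon_{\mathrm{renew}}/(1-\gamma_A)$, and since $\gamma_A \in (0,1)$ the transient term vanishes, giving $\limsup_{k\to\infty} W^k \le \varepsilon_{\mathrm{renew}}/(1-\gamma_A) = \overline{\mathcal{W}}$. For the lower bound I would invoke the quantization floor: by the Stage~C downsampling step each $\hat{\rho}^k$ is supported on at most $N_{\mathrm{a}}N$ atoms, and since $\rho_{\mathrm{GT}}$ has a nondegenerate absolutely continuous part, the optimal finite-point quantization error of $\rho_{\mathrm{GT}}$ over that many points is strictly positive; setting $\underline{\mathcal{W}} := \inf\{\mathcal{W}_2(\nu,\rho_{\mathrm{GT}}) : |\mathrm{supp}\,\nu| \le N_{\mathrm{a}}N\} > 0$ gives $W^k \ge \underline{\mathcal{W}}$ for every $k$, hence $\liminf_{k\to\infty} W^k \ge \underline{\mathcal{W}}$. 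Combining with $\liminf \le \limsup$ also delivers $\underline{\mathcal{W}} \le \overline{\mathcal{W}}$, closing the band.

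The main obstacle I anticipate is making the Stage~A contraction honest. As stated, $T_A$ drives agents toward centroids of the \emph{current} reference map $\hat{\rho}_i^k$, not toward $\rho_{\mathrm{GT}}$ itself, so the clean bound $\mathcal{W}_2(T_A\hat{\rho},\rho_{\mathrm{GT}}) \le \gamma_A\,\mathcal{W}_2(\hat{\rho},\rho_{\mathrm{GT}})$ implicitly couples two convergences: the agents' empirical measure relaxing onto the reference map, and the reference map tracking the ground truth through the weight and birth–death updates. I would address this with a two-scale estimate, bounding $\mathcal{W}_2(\hat{\rho}^k,\rho_{\mathrm{GT}}) \le \mathcal{W}_2(\hat{\rho}^k,\hat{\rho}_{\mathrm{ref}}^k) + \mathcal{W}_2(\hat{\rho}_{\mathrm{ref}}^k,\rho_{\mathrm{GT}})$, showing the first term contracts under Stage~A while the self-correcting renewal drives the second to $O(\varepsilon_{\mathrm{renew}})$ (so the map bias is itself controlled by the renewal disturbance), and absorbing the cross terms into the single constant $\varepsilon_{\mathrm{renew}}$. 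A secondary technical point is that Stage~C's consensus holds only asymptotically and in expectation, so the recursion is really on $\mathbb{E}[W^k]$ up to a disagreement sequence; I would note that joint connectivity makes this residual decay geometrically and hence harmless for the $\limsup$.
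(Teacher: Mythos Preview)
Your proposal is essentially the same as the paper's proof: both decompose one iteration into the four sub-maps (renewal, Stage~A, Stage~B, Stage~C), bound them respectively as an additive $\varepsilon_{\mathrm{renew}}$-perturbation, a $\gamma_A$-contraction, a nonexpansive redistribution, and a (mean-)nonexpansive consensus, then chain to the affine recursion $W^{k+1}\le\gamma_A W^k+\varepsilon_{\mathrm{renew}}$, unroll for the $\limsup$ bound, and invoke the finite-sample-versus-continuous-target argument for the strictly positive floor $\underline{\mathcal{W}}$. If anything you are more careful than the paper---your explicit quantization formulation of $\underline{\mathcal{W}}$ and your flagged concern that Stage~A contracts toward the \emph{reference} map rather than $\rho_{\mathrm{GT}}$ (which the paper simply asserts away via Schur stability of $F_l$) are refinements the paper does not spell out.
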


\vspace{-0.5em}
\begin{proof}
Let $\mathcal{T}_R$, $\mathcal{T}_A$, $\mathcal{T}_B$, and $\mathcal{T}_C$ denote,
respectively, the update mappings corresponding to
(i) sensor-driven renewal,
(ii) local optimal control,
(iii) coverage-tracking redistribution, and
(iv) intermittent consensus.

\textit{1) Renewal step (sample birth-and-death).}
Each agent renews its sample set based on sensor feedback:
new samples are generated near high-GT regions and low-value ones are discarded according to a distance-dependent threshold.
Because sensing is recurrent and both sensor noise and update radii are bounded,
the induced perturbation on the empirical distribution is also uniformly bounded:
\[
\mathcal{W}_2(\mathcal{T}_R(\hat{\rho}^{k}), \rho_{\mathrm{GT}}) 
\le 
\mathcal{W}_2(\hat{\rho}^{k}, \rho_{\mathrm{GT}}) + \varepsilon_{\mathrm{renew}},
\quad \varepsilon_{\mathrm{renew}}>0.
\]

\textit{2) Stage~A: contraction through local optimal control.}
At step $k$, agent $i$ solves the $H$-step cost~\eqref{eq:cost_func}, 
where the weighted centroid~\eqref{eq:centroid} is computed from the updated samples in $\hat{\rho}_i^{k}$.
The analytic optimal control~\eqref{eq:opt_con} from Theorem~\ref{thm:opt_con} yields
\[
\mathbf{x}_i^{l+1} = F_l \mathbf{x}_i^{l} + G_l q_{i,c}^{l},
\quad
F_l = A_i - B_i (R + \gamma_l B_i^\top B_i)^{-1} \gamma_l B_i^\top A_i.
\]
If $F_l$ is Schur-stable ($R\!\succ\!0$, moderate $\gamma_l$),
the mapping $\mathcal{T}_A$ is contractive in $\mathcal{W}_2$:
\[
\mathcal{W}_2(\mathcal{T}_A(\hat{\rho}^{k}), \rho_{\mathrm{GT}})
\le
\gamma_A \, \mathcal{W}_2(\hat{\rho}^{k}, \rho_{\mathrm{GT}}),
\quad 0<\gamma_A<1.
\]

\textit{3) Stage~B: coverage-tracking redistribution.}
After motion, each agent redistributes its sample weights according to the 
priority $r_j^{k}$ while keeping $\sum_{j\in\mathcal{Q}_i^k} w_{j}^{k+1} = \sum_{j\in\mathcal{Q}_i^k} w_{j}^{k} = 1/M$.
Since this process merely redistributes fixed mass over existing samples,
it is nonexpansive in $\mathcal{W}_2$ and introduces no additional disturbance:
\[
\mathcal{W}_2(\mathcal{T}_B(\hat{\rho}^{k}), \rho_{\mathrm{GT}})
\le
\mathcal{W}_2(\hat{\rho}^{k}, \rho_{\mathrm{GT}}).
\]

\textit{4) Stage~C: intermittent consensus.}
Agents communicate only within range $r_c$, forming a time-varying graph $\mathcal{G}^k$ with averaging matrix $A^k=[a_{il}^k]$.
Under joint connectivity, the sequence $\{A^k\}$ is ergodic~\cite{moreau2005stability}, ensuring
\[
\mathbb{E}\bigl[\mathcal{W}_2(\hat{\rho}_i^{k}, \hat{\rho}_\ell^{k})\bigr]\to 0,
\quad \forall i,\ell,
\]
thus $\mathcal{T}_C$ is mean-nonexpansive in $\mathcal{W}_2$.

\textit{5) Composite inequality.}
Combining the mappings yields
\[
\mathcal{W}_2(\hat{\rho}^{k+1}, \rho_{\mathrm{GT}})
\le
\gamma_A \, \mathcal{W}_2(\hat{\rho}^{k}, \rho_{\mathrm{GT}})
+ \varepsilon_{\mathrm{renew}}.
\]
Iterating gives
\[
\textstyle
\limsup_{k\to\infty}\mathcal{W}_2(\hat{\rho}^{k}, \rho_{\mathrm{GT}})
\le
\frac{\varepsilon_{\mathrm{renew}}}{1-\gamma_A}
=:\overline{\mathcal{W}}.
\]

\textit{6) Positive error floor.}
Since $\hat{\rho}^k$ is a finite-sample approximation of continuous $\rho_{\mathrm{GT}}$, a nonzero mismatch remains:
\[
\liminf_{k\to\infty}\mathcal{W}_2(\hat{\rho}^k,\rho_{\mathrm{GT}})\ge\underline{\mathcal{W}}>0,
\]
so $\hat{\rho}^k$ converges within $[\underline{\mathcal{W}},\overline{\mathcal{W}}]$, limited by finite-sample and renewal errors.
\end{proof}

\begin{remark}[On the necessity of dual MLPs]
The convergence in Theorem~\ref{thm:convergence} holds for the AI-augmented D$^2$OC,
where the dual MLPs ensure that Stage~A remains contractive in practice.
Without these AI modules, the control update may lose its contraction property since agents may stagnate near locally dense regions, and
the Wasserstein distance stops decreasing, violating the assumed $\gamma_A<1$ condition.
Hence, the \textit{MLP-MeanVar} and \textit{MLP-AdaptiveStd} are essential not for tightening
the theoretical bound itself but for realizing the contraction behavior that the theorem presumes.
\end{remark}

\section{Simulations}
To evaluate the proposed AI-augmented D$^2$OC framework, we performed simulations under a representative abstraction of a \emph{landfill methane plume detection} scenario.  
Detecting small and transient methane plumes is inherently challenging since satellite-based sensing often fails to distinguish weak emission signals from background variations due to limited spatial resolution and atmospheric scattering~\cite{rouet2024automatic}.  
In contrast, UAV- or ground-based sensing offers higher accuracy but is constrained by flight time, sensing range, and intermittent communication~\cite{shaw2021methods}.  
The simulation environment captures these essential characteristics, multiple localized emission sources, limited sensing range, and communication constraints, allowing systematic evaluation of the algorithm’s adaptability and convergence across diverse environmental mapping conditions.

\subsection{Simulation Setup}
The simulation parameters used in this study are summarized in Table~\ref{tab:sim_params_final}. These parameters define a $200\times200$~m$^2$ domain, AI training configuration, and quadrotor dynamic limits, representing a realistic mid-scale environment suitable for validation of decentralized mapping performance.

In Fig.~\ref{fig:sim_traj}(a), the simulated landfill gas mapping environment is illustrated. The red-shaded region represents the ground-truth (GT) plume distribution, which is unknown to the multi-agent system, while five linearized quadrotors are deployed as robotic platforms. The blue dots indicate the initial sample distribution, representing the system’s prior estimate of the plume distribution. Even if such prior information is unavailable, the proposed method can flexibly operate using a uniform reference distribution as a baseline. 
As shown in Fig. \ref{fig:sim_traj}, the reference samples are deviated from the GT to provide an initial estimation error on purpose.

\begin{table}[h!]
\centering
\caption{Simulation Parameters}
\label{tab:sim_params_final}
\scriptsize
\setlength{\tabcolsep}{3.5pt}   
\renewcommand{\arraystretch}{0.9} 
\begin{tabular}{l l}
\hline
Parameter & Value (Unit) \\
\hline
Domain Size & $200\times200$ (m$^2$)\\
Sample Size ($N_{\mathrm{s}}$)  & 300 (-)\\
Sampling Interval ($\Delta t$) & 0.2 (s) \\
Number of Agents ($N_{\mathrm{a}}$) & 5 (-) \\
Operation Time ($T_{\text{op}}$) & 600 (s) = 10 (mins)\\
Total Control Steps ($M$) & 3000 (-) \\
Sensing/Comm. Range ($r_\mathrm{s}$/$r_\mathrm{c}$) & 10 (m)/20 (m) \\
Sample Create/Drop Thresholds & $1{\times}10^{-5}$ (-)/$2{\times}10^{-6}$ (-)\\ 
Uncertainty Weights ($c_1$/$c_2$) & 1.0 (-)/1.0 (-) \\
Trade-off weight ($\beta$) & 1.0 (-) \\
Horizon ($H$) / Input Weight ($R$) & 5 (-) / $0.01 I_2$ (-) \\
Max Vel/Max Tilt/Angular Rate & 5 (m/s)/$10$ ($^\circ$)/$30$ ($^\circ$/s) \\
Hidden Layers/Neurons per Layer & 2 (-)/64 (-) \\
Learning Rate ($\eta$) & $1{\times}10^{-3}$ (-) \\
MLP Update Interval ($T_{\text{MLP}}$) & 10 (step) \\
\hline
\end{tabular}
\vspace{-.1in}
\end{table}

\begin{figure}[t!]
    \subfloat[]{
    \includegraphics[width=0.5\linewidth]{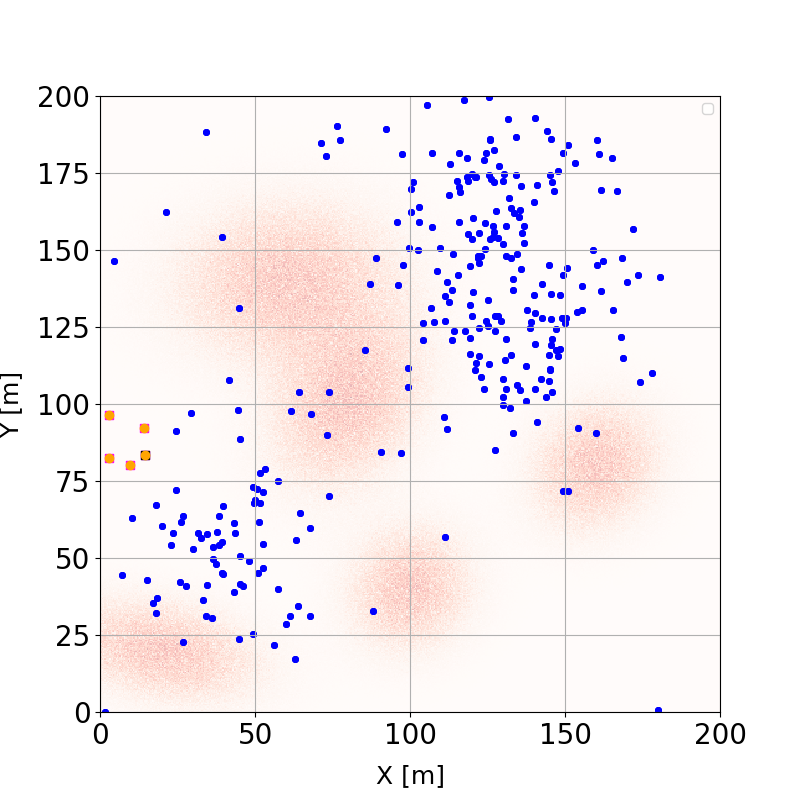}}
    \subfloat[]{
    \includegraphics[width=0.5\linewidth]{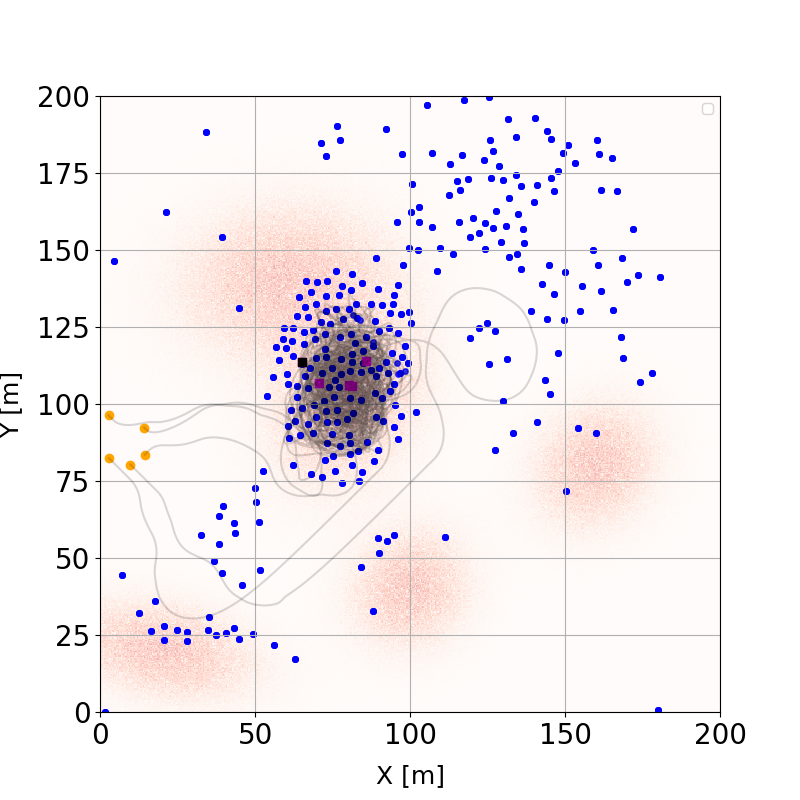}}\\
    \subfloat[]{
    \includegraphics[width=0.5\linewidth]{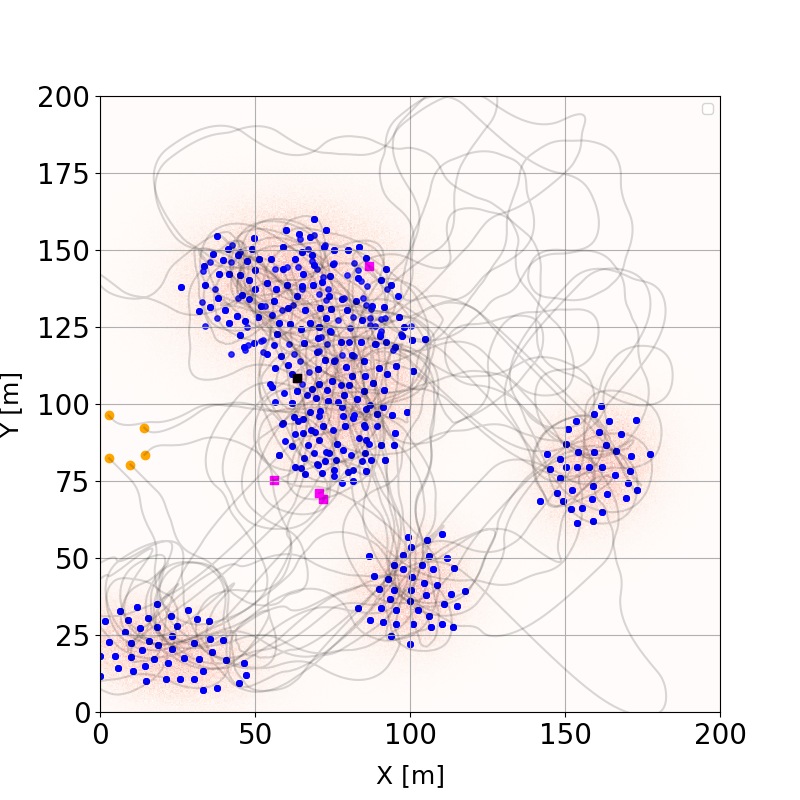}}
    \subfloat[]{
    \includegraphics[width=0.46\linewidth]{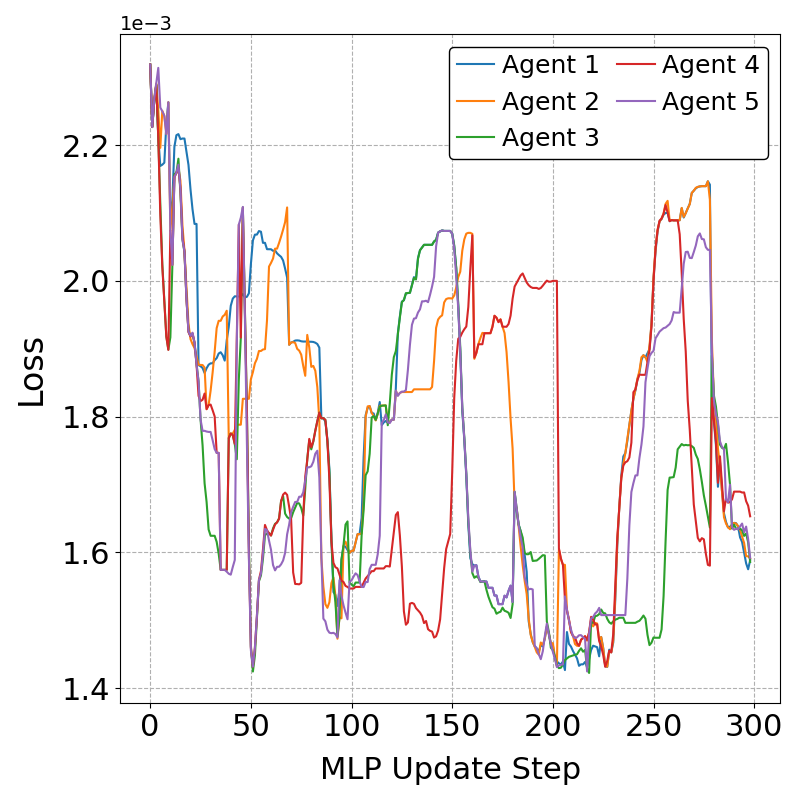}}
    \caption{Simulation results:
    (a) Initial reference samples (blue), GT distribution (shaded red), and initial positions of five agents;  
    (b) Updated sample distribution after mission completion without AI augmentation;
    (c) Results of AI-augmented D$^2$OC showing improved coverage efficiency compared to (b);  
    (d) Time history of the loss function during the adaptive learning process.}\label{fig:sim_traj}
\end{figure}

\subsection{Result Analysis}

Figs.~\ref{fig:sim_traj}(b,c) illustrate the multi-agent trajectories under two control strategies: (b) D$^2$OC without MLP and (c) AI-augmented D$^2$OC with MLP. The initial and final positions of the five agents are shown by orange circles and magenta squares, respectively.
The optimal control inputs are obtained using Theorem~\ref{thm:opt_con} with horizon $H=5$.
In simulation, input bounds on velocity and attitude rates are applied only as post-processing limits to maintain physical plausibility and numerical stability, while the theoretical control law itself remains unconstrained. A visual comparison reveals that the AI-augmented approach reconstructs the plume distribution more accurately. Without AI, agents tend to oversample near high-density regions and become trapped in local minima due to the absence of an escape mechanism. The proposed adaptive variance mechanism mitigates this by increasing the virtual uncertainty of unvisited samples, encouraging exploration of underrepresented regions. Consequently, all plume sources are successfully detected, and the final sample distribution aligns closely with the ground truth.

The learning behavior of the MLP-AdaptiveStd is shown in Fig.~\ref{fig:sim_traj}(d). The network is updated every ten control steps over a total of 3000 control steps. The loss fluctuates within a bounded range of approximately $[1.4\!-\!2.4]\times10^{-3}$, reflecting the dynamic nature of online adaptation. Since MLP-AdaptiveStd continuously updates to align with the evolving virtual variance fields rather than a static target, the regularization loss does not decrease monotonically. Intermittent communication and local sample merging cause distribution shifts, yet the bounded oscillation of the loss indicates that the adaptation remains stable and well-regulated under decentralized operation.

The Wasserstein distance comparison in Fig.~\ref{fig:w_dist} quantitatively evaluates the distributional alignment between the reconstructed and ground-truth densities.
The flat orange line corresponds to the static initial reference, while the blue and red curves represent D$^2$OC without and with MLP, respectively. 
As shown in Fig.~\ref{fig:w_dist} and consistent with Theorem~\ref{thm:convergence}, the Wasserstein distance in the AI-augmented D$^2$OC case shows an overall decreasing trend with minor fluctuations, whereas the non-MLP baseline remains nearly stagnant.
The simulation results confirm this theoretical behavior, showing that the AI-augmented version consistently maintains a smaller Wasserstein distance, indicating more stable convergence and higher reconstruction fidelity under sensing and communication constraints.
\begin{figure}[t!]
    \centering
    \includegraphics[width=0.68\linewidth]{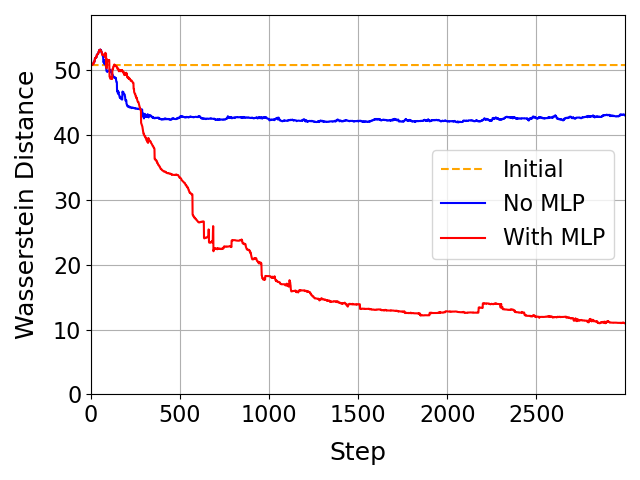}
    \vspace{-1em}
\caption{Wasserstein distance comparison with GT:
initial (orange); no MLP (blue); AI-augmented (red).}
    \label{fig:w_dist}
\end{figure}

\section{Conclusion}
This paper presented an AI-augmented D$^2$OC framework for decentralized multi-agent mapping under sensing and communication constraints.  
By integrating dual-MLP inference into D$^2$OC, each agent adaptively prioritized exploration based on predicted mean and uncertainty, enabling efficient reconstruction of complex spatial distributions without centralized coordination.  
The proposed framework preserves the theoretical convergence guarantees of D$^2$OC under the Wasserstein metric while improving robustness to measurement noise, dynamic sample renewal, and intermittent communication.  
Simulation results demonstrated that the proposed method achieved lower steady-state Wasserstein distance and higher map fidelity than conventional decentralized approaches, confirming improved global consistency in spatial coverage.  
Future work will extend the framework to dynamic environments, heterogeneous agent systems, and real-world UAV experiments for large-scale environmental monitoring and disaster response.


\bibliography{references}

@article{seo2025smartfarm,
  title={Density-Driven Multidrone Coordination for Efficient Farm Coverage and Management in Smart Agriculture},
  author={Seo, Sungjun and Lee, Kooktae},
  journal={IEEE Transactions on Control Systems Technology},
  year={2025},
  publisher={IEEE}
}

@article{kabir2021wildlife,
  title={Wildlife monitoring using a multi-uav system with optimal transport theory},
  author={Kabir, Rabiul Hasan and Lee, Kooktae},
  journal={Applied Sciences},
  volume={11},
  number={9},
  pages={4070},
  year={2021},
  publisher={MDPI}
}

@inproceedings{li2022adjustable,
  title={An adjustable farthest point sampling method for approximately-sorted point cloud data},
  author={Li, Jingtao and Zhou, Jian and Xiong, Yan and Chen, Xing and Chakrabarti, Chaitali},
  booktitle={2022 IEEE workshop on signal processing systems (SiPS)},
  pages={1--6},
  year={2022},
  organization={IEEE}
}

@article{moreau2005stability,
  title={Stability of multiagent systems with time-dependent communication links},
  author={Moreau, Luc},
  journal={IEEE Transactions on automatic control},
  volume={50},
  number={2},
  pages={169--182},
  year={2005},
  publisher={IEEE}
}

@inproceedings{binney2012branch,
  title={Branch and bound for informative path planning},
  author={Binney, Jonathan and Sukhatme, Gaurav S},
  booktitle={2012 IEEE international conference on robotics and automation},
  pages={2147--2154},
  year={2012},
  organization={IEEE}
}

@article{hollinger2014sampling,
  title={Sampling-based robotic information gathering algorithms},
  author={Hollinger, Geoffrey A and Sukhatme, Gaurav S},
  journal={The International Journal of Robotics Research},
  volume={33},
  number={9},
  pages={1271--1287},
  year={2014},
  publisher={SAGE Publications Sage UK: London, England}
}

@article{mathew2011metrics,
  title={Metrics for ergodicity and design of ergodic dynamics for multi-agent systems},
  author={Mathew, George and Mezi{\'c}, Igor},
  journal={Physica D: Nonlinear Phenomena},
  volume={240},
  number={4-5},
  pages={432--442},
  year={2011},
  publisher={Elsevier}
}

@article{gkouletsos2021decentralized,
  title={Decentralized trajectory optimization for multi-agent ergodic exploration},
  author={Gkouletsos, Dimitris and Iannelli, Andrea and de Badyn, Mathias Hudoba and Lygeros, John},
  journal={IEEE Robotics and Automation Letters},
  volume={6},
  number={4},
  pages={6329--6336},
  year={2021},
  publisher={IEEE}
}

@article{julian2012distributed,
  title={Distributed robotic sensor networks: An information-theoretic approach},
  author={Julian, Brian J and Angermann, Michael and Schwager, Mac and Rus, Daniela},
  journal={The International Journal of Robotics Research},
  volume={31},
  number={10},
  pages={1134--1154},
  year={2012},
  publisher={SAGE Publications Sage UK: London, England}
}

@article{krishnan2025distributed,
  title={Distributed online optimization for multi-agent optimal transport},
  author={Krishnan, Vishaal and Mart{\'\i}nez, Sonia},
  journal={Automatica},
  volume={171},
  pages={111880},
  year={2025},
  publisher={Elsevier}
}

@article{bandyopadhyay2017probabilistic,
  title={Probabilistic and distributed control of a large-scale swarm of autonomous agents},
  author={Bandyopadhyay, Saptarshi and Chung, Soon-Jo and Hadaegh, Fred Y},
  journal={IEEE Transactions on Robotics},
  volume={33},
  number={5},
  pages={1103--1123},
  year={2017},
  publisher={IEEE}
}

@article{binney2013optimizing,
  title={Optimizing waypoints for monitoring spatiotemporal phenomena},
  author={Binney, Jonathan and Krause, Andreas and Sukhatme, Gaurav S},
  journal={The International Journal of Robotics Research},
  volume={32},
  number={8},
  pages={873--888},
  year={2013},
  publisher={SAGE Publications Sage UK: London, England}
}

@article{afrazi2025enhanced,
  title={Enhanced Density-driven Control of Multi-agent UAV Systems for Efficient Victim Detection in Large-scale Disaster Scenarios},
  author={Afrazi, Mohammad and Seo, Sungjun and Lee, Kooktae},
  journal={International Journal of Control, Automation and Systems},
  volume={23},
  number={9},
  pages={2728--2742},
  year={2025},
  publisher={Springer}
}

@article{seo2025density,
  title        = {Density-Driven Optimal Control for Efficient and Collaborative Multi-Agent Non-Uniform Coverage},
  author       = {Seo, Sungjun and Lee, Kooktae},
  journal      = {IEEE Transactions on Systems, Man, and Cybernetics: Systems},
  year         = {2025},
  publisher    = {IEEE}
}

@article{rouet2024automatic,
  title={Automatic detection of methane emissions in multispectral satellite imagery using a vision transformer},
  author={Rouet-Leduc, Bertrand and Hulbert, Claudia},
  journal={Nature Communications},
  volume={15},
  number={1},
  pages={3801},
  year={2024},
  publisher={Nature Publishing Group UK London}
}

@article{shaw2021methods,
  title={Methods for quantifying methane emissions using unmanned aerial vehicles: A review},
  author={Shaw, Jacob T and Shah, Adil and Yong, Han and Allen, Grant},
  journal={Philosophical Transactions of the Royal Society A},
  volume={379},
  number={2210},
  pages={20200450},
  year={2021},
  publisher={The Royal Society}
}

\end{document}